\newtheorem{theorem}{Theorem}
\newtheorem{lemma}[theorem]{Lemma}
\newtheorem{proposition}[theorem]{Proposition}
\theoremstyle{definition}
\newtheorem{definition}[theorem]{Definition}
\newtheorem{axiom}[theorem]{Axiom}
\newtheorem{remark}[theorem]{Remark}
\begin{document}

\title{Dimensional Constraints from SU(2) Representation Theory in Graph-Based Quantum Systems}

\author{João P. da Cruz}
\email{joao@quantumcomp.pt}
\affiliation{The Quantum Computer Company, Lisbon, Portugal}
\affiliation{Center for Theoretical and Computational Physics, Lisbon, Portugal}

\date{\today}

\begin{abstract}
We investigate dimensional constraints arising from representation theory when abstract graph edges possess internal degrees of freedom but lack geometric properties. We prove that such internal degrees of freedom can only encode directional information, necessitating quantum states in $\mathbb{C}^2$ (qubits) as the minimal representation. Any geometrically consistent projection of these states maps necessarily to $\mathbb{R}^3$ via the Bloch sphere. This dimensional constraint $d=3$ emerges through self-consistency: edges without intrinsic geometry force directional encoding ($\mathbb{C}^2$), whose natural symmetry group $SU(2)$ has three-dimensional Lie algebra, yielding emergent geometry that validates the hypothesis via Bloch sphere correspondence ($S^2 \subset \mathbb{R}^3$). We establish uniqueness (SU($N>2$) yields $d>3$) and robustness (dimensional saturation under graph topology changes). The Euclidean metric emerges canonically from the Killing form on $\mathfrak{su}(2)$. A global gauge consistency axiom is justified via principal bundle trivialization for finite graphs. Numerical simulations verify theoretical predictions. This result demonstrates how dimensional structure can be derived from information-theoretic constraints, with potential relevance to quantum information theory, discrete geometry, and quantum foundations.
\end{abstract}

\maketitle

\section{Introduction}

Understanding how geometric structure might emerge from more fundamental quantum or combinatorial principles has long been of interest in mathematical physics \cite{Wheeler1990,Sorkin2003}. While various approaches to discrete quantum geometry exist \cite{Oriti2016,Ambjorn2012}, the question of \emph{why} emergent geometry should have specific dimensionality has received limited attention.

In this paper, we prove a representation-theoretic result: starting from abstract graph edges that possess internal degrees of freedom due to the lack of geometric properties, we show that these degrees of freedom must be encoded as quantum states in $\mathbb{C}^2$ with $SU(2)$ symmetry, and that any geometrically consistent representation necessarily projects to three-dimensional Euclidean space. This provides a rigorous derivation of dimensional emergence from information-theoretic and algebraic principles.

\subsection{Mathematical Setup}

We consider an abstract graph $G=(V,E)$ that is purely combinatorial---vertices and edges with no intrinsic geometry. Edges possess internal degrees of freedom but lack geometric properties such as length, position, or embedding in any ambient space.

Our central question: \emph{What constraints arise on these internal degrees of freedom when we require geometrically consistent representations?}

This question is non-trivial because the internal structure (abstract degrees of freedom) must be mapped to geometric objects (vectors in some $\mathbb{R}^d$) in a way that respects symmetries and does not introduce arbitrary choices.

\subsection{Bootstrap Mechanism}

The derivation proceeds through a self-consistency bootstrap:

\begin{enumerate}
\item \textbf{Input:} Edges have internal degrees of freedom (d.o.f.)
\item \textbf{Constraint:} Edges lack geometric embedding (Axiom~\ref{ax:non_embedding})
\item \textbf{Consequence:} Internal d.o.f.\ can only encode directional information (Proposition~\ref{prop:directional})
\item \textbf{Quantum minimality:} Directional info requires $\mathbb{C}^2$ states (Lemma~\ref{lem:qubit_minimal})
\item \textbf{Natural symmetry:} $\mathbb{C}^2$ has $SU(2)$ symmetry with $\dim(\mathfrak{su}(2))=3$
\item \textbf{Emergent geometry:} States project to $\mathbb{R}^3$ via Bloch map (Theorem~\ref{thm:bloch_unique})
\item \textbf{Validation:} $S^2 \subset \mathbb{R}^3$ matches Bloch sphere exactly
\end{enumerate}

This is not circular reasoning: we assume edges have \emph{some} internal structure (step 1) and lack embedding (step 2). Steps 3-4 \emph{derive} that this structure must be $\mathbb{C}^2$ states. Steps 5-6 \emph{derive} that geometry must be 3D. Step 7 validates the consistency of the entire construction.

\subsection{Main Results}

Our principal results establish dimensional constraints through representation theory:

\textbf{Directional information necessitates $\mathbb{C}^2$:} Edges without geometric embedding can only encode orientation (Proposition~\ref{prop:directional}). The minimal quantum representation of directional information is $\mathbb{C}^2$ (Lemma~\ref{lem:qubit_minimal}).

\textbf{Self-consistent dimensionality:} States in $\mathbb{C}^2$ with $SU(2)$ symmetry admit geometrically consistent projections if and only if the target space is $\mathbb{R}^3$ (Theorem~\ref{thm:bootstrap}).

\textbf{Uniqueness of Bloch projection:} There exists a unique (up to rotations) equivariant map from $\mathbb{C}^2$ to $\mathbb{R}^3$, namely the Bloch projection, inducing the adjoint representation $SU(2) \to SO(3)$ (Theorem~\ref{thm:bloch_unique}).

\textbf{Emergent metric structure:} The Euclidean metric on $\mathbb{R}^3$ arises canonically from the Killing form on $\mathfrak{su}(2)$ (Remark in Section~\ref{sec:su2}).

\textbf{Dimensional robustness:} Arbitrary vertex valence preserves $d=3$ (Theorem~\ref{thm:saturation}).

\textbf{Exclusivity of $SU(2)$:} Groups $SU(N>2)$ yield $d>3$ and violate directional-only constraints (Theorem~\ref{thm:sun_general}).

\subsection{Scope and Interpretation}

This paper establishes a \emph{mathematical} result linking representation theory to geometric constraints. We derive that $d=3$ follows necessarily from our axioms and prove this is the unique self-consistent solution.

\textbf{What we establish:}
\begin{itemize}
\item Dimensional constraint from SU(2) representation theory
\item Uniqueness of Bloch projection for geometric consistency
\item Canonical Euclidean metric from Killing form
\item Dimensional stability under topology changes
\end{itemize}

\textbf{What remains open:}
\begin{itemize}
\item Physical interpretation of non-embedding hypothesis
\item Connection (if any) to continuum spacetime theories
\item Dynamics and evolution of graph states
\item Observable consequences in physical systems
\end{itemize}

This work contributes to understanding how algebraic structures constrain geometric representations, with potential relevance to various areas including quantum information theory, discrete approaches to geometry, and foundations of quantum mechanics.

\subsection{Organization}

Section~\ref{sec:bootstrap} presents the self-consistency bootstrap. Section~\ref{sec:conceptual} clarifies mathematical structures. Section~\ref{sec:framework} defines the framework. Section~\ref{sec:su2} proves Bloch projection uniqueness and derives the emergent metric. Section~\ref{sec:saturation} establishes dimensional saturation. Section~\ref{sec:sun} shows SU($N>2$) incompatibility. Section~\ref{sec:numerics} provides numerical verification. Section~\ref{sec:conclusion} concludes with open questions.

\section{Self-Consistency Bootstrap}
\label{sec:bootstrap}

\subsection{Non-Embedding and Information Constraints}

\begin{axiom}[Edge Non-Embedding]
\label{ax:non_embedding}
Edges in the graph $G=(V,E)$ carry quantum degrees of freedom but possess no intrinsic geometric properties (length, position, embedding).
\end{axiom}

\textbf{Motivation:} This axiom posits a purely combinatorial substrate where geometry emerges rather than being fundamental. While this is a hypothesis rather than a derivation, it provides a clean mathematical starting point for investigating geometric emergence.

\textbf{What are these "degrees of freedom"?} We do not assume at the outset that edges carry specific quantum states like $\mathbb{C}^2$. Rather, we assume edges possess \emph{some} internal structure---abstract degrees of freedom that distinguish different edge configurations. The question we answer is: \emph{what mathematical structure must these degrees of freedom have?} Our answer: they must be representable as quantum states in $\mathbb{C}^2$, as we now prove.

\begin{proposition}[Directional Information Only]
\label{prop:directional}
If edges lack embedding, they can only encode \emph{directional} information---orientation without magnitude.
\end{proposition}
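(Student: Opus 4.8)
The plan is to argue by elimination over the geometric attributes an edge could conceivably carry, using invariance under exactly those transformations that Axiom~\ref{ax:non_embedding} declares unavailable. First I would make precise what "geometric information" attached to a directed edge would mean in a hypothetical embedded picture: it is a pair $(x,v) \in \mathbb{R}^d \times \mathbb{R}^d$, a location $x$ together with a displacement vector $v$ (with $v \neq 0$ for a genuine edge). The force of the non-embedding axiom is that none of the structure distinguishing one such embedding from another is intrinsically available. I would phrase this operationally: a quantity is genuinely encodable by an edge only if it is invariant under every map that preserves the combinatorial graph while altering the (absent) geometric data, and I would identify the two generators of that transformation group.

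Absence of position means invariance under translation $x \mapsto x + a$ for arbitrary $a \in \mathbb{R}^d$; this quotients out the location component entirely, so no positional content can be stored. Absence of a length scale means invariance under dilation $v \mapsto \lambda v$ for arbitrary $\lambda > 0$; this collapses the nonzero displacement to its class under positive scaling. The resulting orbit space is $(\mathbb{R}^d \setminus \{0\})/\mathbb{R}_{>0} \cong S^{d-1}$, the set of pure directions—orientation with magnitude divided out. I would then verify that direction is indeed the \emph{only} surviving invariant and a nontrivial one: directions are manifestly translation- and scale-invariant hence admissible, whereas any functional sensitive to $x$ or to $|v|$ violates the corresponding invariance and so cannot be a property of a geometry-free edge. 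This closes the dichotomy and yields the assertion that the internal degrees of freedom can encode only directional information.

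The step I expect to be the main obstacle is justifying the dilation invariance, i.e.\ that "no length" forces full scale-invariance rather than the weaker "no preferred unit." The delicate point is whether dilations act globally (leaving relative lengths between distinct edges invariant) or independently per edge. I would argue for the per-edge form: since the axiom supplies no geometric structure linking the magnitudes of different edges—there is no ambient metric and no distinguished reference edge—there is no graph-intrinsic way even to compare two lengths, so each edge's magnitude label is independently pure gauge and carries no information. The positional claim is comparatively immediate, translation-invariance of free (difference) vectors being automatic; the conceptual weight of the proposition therefore rests on establishing that magnitude is an undetermined, and hence non-encodable, quantity.
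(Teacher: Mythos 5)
Your quotient construction has an internal inconsistency at exactly the point that carries the proposition's content: why orientation survives when position and magnitude do not. Your operational criterion says a quantity is encodable only if it is invariant under every transformation that alters the absent geometric data, and you argue---by the ``nothing links distinct edges'' reasoning---that such transformations act per edge. But absence of an embedding removes not only position and scale; it also removes any preferred frame, so per-edge rotations $v \mapsto Rv$, $R \in O(d)$, belong to the same gauge group by precisely your own argument: there is no ambient space in which to compare the directions of two distinct edges, just as there is no graph-intrinsic way to compare their lengths. Quotienting $(\mathbb{R}^d \setminus \{0\})$ by translations, dilations, \emph{and} per-edge rotations leaves a single point, so your elimination argument, applied consistently, proves that edges can encode nothing at all---not that they encode a point of $S^{d-1}$. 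The selective omission of rotations from the gauge group is the step that fails, and it is not a technicality: deciding which transformations are pure gauge and which preserve encodable structure is the entire content of the proposition.

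The paper's (admittedly terse) proof sidesteps this by locating the surviving information not in a hypothetical ambient $\mathbb{R}^d$ but in an abstract \emph{internal} space, where only \emph{relative} orientation is claimed to be meaningful; the cross-edge comparability of those internal directions is not derived there either, but is postulated separately later (Axiom~\ref{ax:global}, the global gauge basis), which is what blocks the per-edge rotation argument in the paper's framework. To repair your proof you would need the analogous move: justify, or explicitly assume, that the direction label lives in a common internal space shared by all edges, so that rotations act globally (preserving relative orientations) while dilations---for which no shared structure exists---act independently per edge. As written, the asymmetric treatment of dilations (per edge) and rotations (implicitly global, since you retain $S^{d-1}$ as meaningful data) is assumed rather than argued, and supplying that asymmetry is the missing idea.
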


\begin{proof}
Without embedding: no length (no metric), no position (no ambient space), no curvature (no geometric structure). The only available information is relative orientation in an abstract internal space---equivalently, a point on a unit sphere.
\end{proof}

\subsection{Quantum Minimal Representation}

\begin{lemma}[Qubit Minimality]
\label{lem:qubit_minimal}
To encode pure directional information quantum mechanically, the minimal Hilbert space is $\mathbb{C}^2$.
\end{lemma}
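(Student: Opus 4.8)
The plan is to show that $\mathbb{C}^2$ is the minimal Hilbert space for encoding directional information by first characterizing what "directional information" means quantum-mechanically, then arguing that $\mathbb{C}^1$ is insufficient while $\mathbb{C}^2$ suffices. From Proposition~\ref{prop:directional}, the data to be encoded is a point on a unit sphere---an orientation without magnitude. The key observation is that a quantum state carrying directional information must admit a continuum of physically distinguishable orientations related by a symmetry group acting transitively on directions, and I would formalize "minimal" as the smallest dimension $n$ such that $\mathbb{C}^n$ supports a faithful, nontrivial, continuous representation of the relevant rotation symmetry whose pure states parametrize a sphere.

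First I would rule out $n=1$: the projective Hilbert space of $\mathbb{C}^1$ is a single point, so there is no room for distinct directions---all states in $\mathbb{C}^1$ differ only by a global phase and hence represent the same physical state, encoding zero directional degrees of freedom. This establishes that one complex dimension cannot carry orientation. Second, I would exhibit $\mathbb{C}^2$ as sufficient: the space of pure states is the projective line $\mathbb{CP}^1$, which is diffeomorphic to $S^2$ via the Bloch/Hopf correspondence, so pure qubit states are in bijection with points on the unit sphere---exactly the directional data identified in Proposition~\ref{prop:directional}. This shows $\mathbb{C}^2$ realizes the required orientation space with no leftover magnitude information, since normalization removes the radial degree of freedom and the global phase is quotiented out.

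To make "minimality" rigorous rather than merely dimension-counting, I would connect it to the representation theory flagged in the bootstrap: directional information in three-dimensional ambient space carries an $SO(3)$ (equivalently $SU(2)$) action, and the smallest nontrivial complex representation on which this symmetry acts faithfully as a rotation of directions is the two-dimensional fundamental representation of $SU(2)$. Any faithful encoding of a continuous family of directions must carry such a representation, and since the fundamental representation is two-dimensional, $\dim_{\mathbb{C}} \mathcal{H} \geq 2$, with equality achieved by the qubit. This ties minimality to the natural symmetry group rather than to an ad hoc notion of "enough room."

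The main obstacle I anticipate is justifying the step from "encode directional information" to "must carry a representation of the rotation group acting transitively on the state space," since this requires pinning down precisely which symmetry acts and why the encoding must be equivariant rather than an arbitrary set-theoretic labeling of directions by states. A purely combinatorial bijection between directions and states of some low-dimensional space could naively seem to evade the bound, so the delicate point is to argue that quantum-mechanical consistency---linearity, the superposition principle, and the requirement that physical rotations of a direction correspond to unitary transformations of the encoding state---forces the equivariance that makes $\mathbb{C}^2$ genuinely minimal. I would address this by insisting that the encoding be a continuous $SU(2)$-equivariant map, so that rotating a direction corresponds to a unitary acting on the Hilbert space, which is what excludes pathological lower-dimensional encodings and secures the lower bound $\dim_{\mathbb{C}} \mathcal{H} \geq 2$.
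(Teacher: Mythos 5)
Your proposal is correct, and its core coincides with the paper's proof: you rule out $\mathbb{C}^1$ because its projective space is a single point, and you exhibit $\mathbb{C}^2$ as sufficient because $\mathbb{CP}^1 \cong S^2$ is exactly the space of directions identified in Proposition~\ref{prop:directional}. Where you genuinely depart from the paper is in how "minimal" is justified: the paper's argument is pure parameter counting (it also notes that $\mathbb{C}^3$ yields $\mathbb{CP}^2$, four real parameters, hence excess structure beyond a sphere), whereas you ground the lower bound $\dim_{\mathbb{C}}\mathcal{H} \geq 2$ in representation theory, demanding that the encoding be a continuous $SU(2)$-equivariant map and using the fact that $SU(2)$ admits no nontrivial one-dimensional representation. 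This is a real strengthening: it closes the loophole, which the paper's proof leaves open, that an arbitrary set-theoretic labeling of directions by states could masquerade as an "encoding," and it anticipates the equivariance condition the paper only introduces later (Definition~\ref{def:projection}, Theorem~\ref{thm:bloch_unique}). One caution on your added argument: the phrase "$SO(3)$ (equivalently $SU(2)$)" elides precisely the point that makes the bound come out to $2$ rather than $3$. A nontrivial genuine representation of $SO(3)$ has complex dimension at least $3$; it is only because quantum states are rays---so projective representations suffice, by Wigner's theorem---that the double cover $SU(2)$ and its two-dimensional fundamental representation are admissible, with the action descending to $SO(3)$ rotations of $\mathbb{CP}^1$. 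Making that distinction explicit is needed for your equivariance argument to be airtight. Finally, you omit the paper's $\mathbb{C}^3$ remark; this costs nothing for minimality (only lower dimensions matter for the bound), though the paper uses it to claim that $\mathbb{C}^2$ not only is minimal but carries \emph{exactly} directional information and nothing more.
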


\begin{proof}
$\mathbb{C}^1$: Trivial (no degrees of freedom after normalization). 

$\mathbb{C}^2$: State $|\psi\rangle = \alpha|0\rangle + \beta|1\rangle$ with $|\alpha|^2 + |\beta|^2=1$ has 2 real parameters (amplitude ratio + relative phase) parametrizing $S^2$ (the Bloch sphere). A 2-sphere is precisely the space of directions in $\mathbb{R}^3$.

$\mathbb{C}^3$: Has 4 real parameters (after normalization) parametrizing $\mathbb{CP}^2 \not\cong S^2$. This carries additional structure beyond pure directional information.

Hence $\mathbb{C}^2$ uniquely encodes minimal directional information.
\end{proof}

\subsection{The Bootstrap Equation}

\begin{theorem}[Self-Consistent Dimensionality]
\label{thm:bootstrap}
The following statements are mutually consistent if and only if $d=3$:
\begin{enumerate}
\item Edges carry minimal directional information ($\mathbb{C}^2$)
\item Natural symmetry group is $SU(2)$ on $\mathbb{C}^2$
\item Emergent dimension $d = \dim(\mathfrak{su}(2)) = 3$
\item Directional information projects to $S^{d-1}$ in $\mathbb{R}^d$
\item Bloch sphere $S^2 \subset \mathbb{R}^3$ exhausts qubit state space
\end{enumerate}
\end{theorem}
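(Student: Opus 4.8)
The plan is to interpret ``mutually consistent'' as the existence of a single value of $d$ for which all five statements hold simultaneously, and to establish the biconditional by exhibiting two independent constraints on $d$ that both force $d=3$ and agree precisely there. First I would prove the sufficiency direction ($d=3 \Rightarrow$ consistency) by direct verification: statement~1 is supplied by Lemma~\ref{lem:qubit_minimal}; statement~2 follows because the inner-product-preserving automorphisms of $\mathbb{C}^2$, modulo an irrelevant global phase, constitute exactly $SU(2)$; statement~3 reduces to the elementary count $\dim_{\mathbb{R}}\mathfrak{su}(2)=N^2-1=3$ for $N=2$ (traceless anti-Hermitian $2\times 2$ matrices, spanned by $i\sigma_x,i\sigma_y,i\sigma_z$); statement~4 gives $S^{d-1}=S^2$; and statement~5 identifies this with the Bloch sphere. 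Each is individually true and no two conflict, so the system closes.

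For necessity (consistency $\Rightarrow d=3$) I would extract two separate equations for $d$. The first comes from statement~3 alone: $d=\dim\mathfrak{su}(2)$, whose right-hand side I compute independently to be $3$, yielding $d=3$. The second comes from combining statements~4 and~5: statement~5 asserts that the qubit state space is the Bloch sphere $S^2$, while statement~4 asserts that the directional data occupy $S^{d-1}$; since both must describe the same space of directions, $S^{d-1}\cong S^2$, forcing $d-1=2$. The non-trivial content of the theorem is precisely that these two a priori unrelated constraints---one algebraic (Lie-algebra dimension), one topological (sphere dimension)---return the same answer, so that the bootstrap admits a fixed point at all. To rule out any other $d$, I would note that each constraint already pins $d$ uniquely, so no $d\neq 3$ can satisfy statement~3 (or statements~4--5) even in isolation.

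I would make the identification underlying statement~3 rigorous by invoking Theorem~\ref{thm:bloch_unique}: the emergent geometry is the target of the unique equivariant Bloch map, whose codomain is the three-dimensional carrier space of the adjoint representation $SU(2)\to SO(3)$; thus the emergent $d$ is genuinely determined by $\dim\mathfrak{su}(2)$ rather than posited. This is where I expect the main obstacle to lie---not in the dimension counts, which are elementary, but in justifying that ``emergent dimension'' must mean the dimension of the space hosting the adjoint action, so that statement~3 is forced rather than assumed. Care is also needed to avoid the appearance of circularity flagged in the bootstrap discussion: the argument must treat $d$ as a free parameter throughout and derive $d=3$ as the unique self-consistent value, which the two independent constraints accomplish.
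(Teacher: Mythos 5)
Your proposal is correct at the level of rigor this theorem admits, but it is organized differently from the paper's proof. The paper proves a closed cycle of implications, $(1)\Rightarrow(2)\Rightarrow(3)\Rightarrow(4)\Rightarrow(5)\Rightarrow(1)$ --- using the action of $SU(2)$ on $\mathbb{CP}^1\cong S^2$, the count $\dim\mathfrak{su}(2)=3$, Theorem~\ref{thm:bloch_unique} for the step $(3)\Rightarrow(4)$, and the identification $S^{d-1}=S^2$ at $d=3$ --- and then disposes of other dimensions in one line ($d=2$ gives $S^1\not\cong S^2$; $d\geq 4$ gives ``redundant dimensions''). You instead interpret ``mutually consistent'' as joint satisfiability, verify each statement directly at $d=3$, and for necessity extract two independent constraints: $d=\dim\mathfrak{su}(2)=3$ from statement~3 alone, and $S^{d-1}\cong S^2\Rightarrow d=3$ from statements~4 and~5. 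The underlying facts are identical, but the scaffolding differs: the paper's cyclic structure mirrors its ``bootstrap'' narrative (each statement begets the next, and the loop closes), whereas your decomposition makes the logical content of the ``if and only if'' sharper --- in particular, your necessity argument is cleaner than the paper's vague ``redundant dimensions'' remark, since each of your two constraints pins $d$ uniquely, and you correctly identify that the theorem's substance is the coincidence of an algebraic and a topological determination of $d$. You also anticipate the paper's reliance on Theorem~\ref{thm:bloch_unique} to give ``emergent dimension'' a precise meaning, which is exactly the role it plays in the paper's step $(3)\Rightarrow(4)$. One shared imprecision to note: your claim that the phase-quotiented unitaries of $\mathbb{C}^2$ ``constitute exactly $SU(2)$'' has the same defect as the paper's own assertion $SU(2)=U(2)/U(1)$ --- strictly, $U(2)/U(1)\cong PU(2)\cong SO(3)$, of which $SU(2)$ is the double cover --- so this does not put you below the paper's standard, but it is worth stating carefully.
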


\begin{proof}
(1)$\Rightarrow$(2): $SU(2) = U(2)/U(1)$ acts faithfully and transitively on $\mathbb{CP}^1 \cong S^2$. (2)$\Rightarrow$(3): $\dim(\mathfrak{su}(2))=3$ (representation theory). (3)$\Rightarrow$(4): Established in Theorem~\ref{thm:bloch_unique}. (4)+(3)$\Rightarrow$(5): If $d=3$ then $S^{d-1}=S^2$, the Bloch sphere. (5)$\Rightarrow$(1): Confirms directional information structure.

Uniqueness: $d=2$ gives $S^1 \not\cong S^2$ (insufficient); $d\geq 4$ gives redundant dimensions.
\end{proof}

\textbf{Why non-circular:} We assume non-embedding (Axiom~\ref{ax:non_embedding}), not $\mathbb{C}^2$ or $d=3$. Information constraints force $\mathbb{C}^2$; representation theory forces $d=3$; emergent geometry validates via Bloch sphere correspondence.

\section{Mathematical Structures}
\label{sec:conceptual}

\subsection{Two Distinct Objects}

Our framework involves two mathematical structures:

\textbf{Combinatorial graph $G=(V,E)$:} Abstract relational structure, no dimensionality or geometry. Provides organizational scaffold.

\textbf{Emergent space $\mathbb{R}^3$:} Euclidean vector space with metric from Killing form. Dimension fixed by representation theory.

The graph does not "become" $\mathbb{R}^3$; rather, quantum states on edges project to vectors in an emergent geometric space.

\subsection{Projection Map}

\begin{equation}
\text{Edge } e \in E \xrightarrow{\text{carries}} |\psi_e\rangle \in \mathbb{C}^2 \xrightarrow{\Phi_{\text{Bloch}}} \vec{n}_e \in \mathbb{R}^3.
\end{equation}

The edge itself has no geometry; it labels a quantum degree of freedom whose expectation values of $SU(2)$ generators define a point in $\mathbb{R}^3$. We prove in Section~\ref{sec:su2} that this projection is unique up to rotations (Theorem~\ref{thm:bloch_unique}).

\subsection{Global Gauge Consistency}

\begin{axiom}[Global Gauge Basis]
\label{ax:global}
There exists a globally defined orthonormal basis $\{\sigma_x, \sigma_y, \sigma_z\}$ for $\mathfrak{su}(2)$ used by all Bloch projections, such that gauge transformations induce rotations in the same $\mathbb{R}^3$ for all edges.
\end{axiom}

\textbf{Mathematical justification:} The graph serves as base space of a principal $SU(2)$-bundle $P \to G$. For finite graphs, any principal bundle is trivializable: $P \cong G \times SU(2)$ (finite graphs are contractible). This guarantees existence of a global section. Axiom~\ref{ax:global} fixes such a trivialization, providing a common coordinate system for the internal space while preserving gauge freedom at vertices.

\section{Geometric Projection Framework}
\label{sec:framework}

\subsection{Graph-Algebraic Structure}

Connected, locally finite graph $G=(V,E)$. By Axiom~\ref{ax:non_embedding}, edges have no intrinsic geometry. Each edge carries $\mathcal{H}_e \cong \mathbb{C}^2$ (forced by Lemma~\ref{lem:qubit_minimal}) with fundamental $SU(2)$ representation.

\subsection{Geometrically Consistent Projections}

\begin{definition}
\label{def:projection}
Map $\Phi : \mathcal{H}_e \to \mathbb{R}^d$ is \emph{geometrically consistent} if: (1) \textbf{Equivariant:} $\Phi(\rho_e(g)\psi) = R_g \Phi(\psi)$ for $R_g \in O(d)$; (2) \textbf{Non-degenerate:} $\overline{\Phi(\mathbb{P}\mathcal{H}_e)} \supseteq S^{d-1}$; (3) \textbf{Local:} $\Phi$ depends only on edge state.
\end{definition}

These conditions ensure: internal symmetries manifest geometrically, full angular structure represented, and bottom-up geometric construction possible \cite{VanRaamsdonk2010,Evenbly2015}.

\section{The SU(2) Case}
\label{sec:su2}

\subsection{The Bloch Map}

\begin{definition}
$\Phi_{\mathrm{Bloch}}: \mathbb{C}^2 \to \mathbb{R}^3$ is:
\begin{equation}
\Phi_{\mathrm{Bloch}}(\psi) = (\langle\psi|\sigma_x|\psi\rangle, \langle\psi|\sigma_y|\psi\rangle, \langle\psi|\sigma_z|\psi\rangle),
\end{equation}
where $\{\sigma_i\}$ are Pauli matrices.
\end{definition}

\begin{lemma}
\label{lem:bloch_sphere}
For normalized $|\psi\rangle\in\mathbb{C}^2$: $\|\Phi_{\mathrm{Bloch}}(\psi)\| = 1$ iff $|\psi\rangle$ pure. Image of pure states is $S^2$.
\end{lemma}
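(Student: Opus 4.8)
The plan is to prove Lemma~\ref{lem:bloch_sphere} by direct computation of the squared norm of the Bloch vector, exploiting the algebraic identities satisfied by the Pauli matrices. The cleanest route is to work with the density matrix formulation, since purity has a transparent characterization there and the computation generalizes automatically to mixed states, which sharpens the ``iff'' claim.

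First I would write the density matrix of a general (possibly mixed) state as $\rho = \tfrac{1}{2}(I + \vec{r}\cdot\vec{\sigma})$, where $\vec{r} = (r_x, r_y, r_z)$ and $r_i = \operatorname{Tr}(\rho\,\sigma_i)$. For a pure state $\rho = |\psi\rangle\langle\psi|$ one has $r_i = \langle\psi|\sigma_i|\psi\rangle$, so $\vec{r} = \Phi_{\mathrm{Bloch}}(\psi)$ exactly. Next I would compute $\operatorname{Tr}(\rho^2)$ using the trace identities $\operatorname{Tr}(\sigma_i)=0$, $\operatorname{Tr}(\sigma_i\sigma_j)=2\delta_{ij}$, and $\operatorname{Tr}(I)=2$. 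A short calculation gives $\operatorname{Tr}(\rho^2) = \tfrac{1}{2}(1 + \|\vec{r}\|^2)$. Since $\operatorname{Tr}(\rho^2)=1$ if and only if $\rho$ is pure (and $\operatorname{Tr}(\rho^2)<1$ strictly for mixed states), this immediately yields $\|\Phi_{\mathrm{Bloch}}(\psi)\|=1$ precisely when $|\psi\rangle$ is pure. This establishes the first sentence of the lemma in both directions at once.

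For the surjectivity claim --- that the image of pure states is all of $S^2$ --- I would exhibit an explicit parametrization. Writing $|\psi\rangle = \cos(\theta/2)|0\rangle + e^{i\varphi}\sin(\theta/2)|1\rangle$ with $\theta\in[0,\pi]$ and $\varphi\in[0,2\pi)$, a direct evaluation of the three expectation values gives $\Phi_{\mathrm{Bloch}}(\psi) = (\sin\theta\cos\varphi, \sin\theta\sin\varphi, \cos\theta)$, which are standard spherical coordinates sweeping out the entire unit sphere as $(\theta,\varphi)$ range over their domains. Combined with the norm computation, this shows the image is exactly $S^2$.

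I do not anticipate a serious obstacle here: both parts reduce to routine Pauli-algebra manipulations, and the density-matrix identity $\operatorname{Tr}(\rho^2) = \tfrac{1}{2}(1+\|\vec r\|^2)$ does all the conceptual work. The only point requiring minor care is the characterization of purity via $\operatorname{Tr}(\rho^2)=1$: since every $|\psi\rangle\in\mathbb{C}^2$ that is normalized is automatically a pure state, the ``iff'' in the lemma statement is really distinguishing normalized kets from the broader density-matrix picture, so I would note explicitly that the mixed-state inequality $\operatorname{Tr}(\rho^2)<1$ is what makes the equivalence nontrivial and confirms that no normalized pure ket can fail to land on $S^2$.
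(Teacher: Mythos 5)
Your proof is correct, but it takes a genuinely different route from the paper's. The paper works directly with amplitudes: writing $|\psi\rangle = \alpha|0\rangle+\beta|1\rangle$, it computes the three expectation values explicitly, expands $\|\Phi_{\mathrm{Bloch}}(\psi)\|^2 = 4|\alpha|^2|\beta|^2+(|\alpha|^2-|\beta|^2)^2 = (|\alpha|^2+|\beta|^2)^2 = 1$, disposes of mixed states by an appeal to ``linearity of expectation values,'' and obtains the image claim from the identification of pure states with $\mathbb{CP}^1\cong S^2$. You instead pass to the density-matrix picture, where the single identity $\operatorname{Tr}(\rho^2)=\tfrac{1}{2}\bigl(1+\|\vec r\|^2\bigr)$ delivers both directions of the equivalence at once, and you prove the image claim by the explicit parametrization $(\theta,\varphi)\mapsto(\sin\theta\cos\varphi,\sin\theta\sin\varphi,\cos\theta)$. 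Your route is arguably tighter on two points. First, the paper's mixed-state step is its weakest link: linearity alone only gives $\vec r_\rho=\sum_i p_i\vec r_i$, and one still needs strict convexity of the Euclidean ball (a nontrivial convex combination of distinct unit vectors has norm strictly less than one) to conclude $\|\vec r_\rho\|<1$; your trace identity makes this rigorous, since $\operatorname{Tr}(\rho^2)<1$ for mixed states follows immediately from the spectral decomposition. Second, your spherical parametrization proves surjectivity onto $S^2$ constructively, where the paper invokes $\mathbb{CP}^1\cong S^2$ as a known fact. What the paper's version buys in exchange is self-containedness at the most elementary level---no trace gymnastics, just quadratic algebra in $\alpha,\beta$. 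Your closing observation is also well taken: as literally stated the lemma's ``iff'' is vacuous, since every normalized ket in $\mathbb{C}^2$ is pure; it acquires content only once $\Phi_{\mathrm{Bloch}}$ is extended to density matrices, which is exactly how both you and the paper read it.
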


\begin{proof}
Write $|\psi\rangle = \alpha|0\rangle + \beta|1\rangle$ with $|\alpha|^2+|\beta|^2=1$. Direct computation yields:
\begin{align}
\langle\psi|\sigma_x|\psi\rangle &= \alpha^*\beta + \alpha\beta^* = 2\mathrm{Re}(\alpha^*\beta), \\
\langle\psi|\sigma_y|\psi\rangle &= i(\alpha^*\beta - \alpha\beta^*) = 2\mathrm{Im}(\alpha^*\beta), \\
\langle\psi|\sigma_z|\psi\rangle &= |\alpha|^2 - |\beta|^2.
\end{align}
Therefore:
\begin{align}
\|\Phi_{\mathrm{Bloch}}(\psi)\|^2 &= 4|\alpha|^2|\beta|^2 + (|\alpha|^2-|\beta|^2)^2 \nonumber \\
&= 4|\alpha|^2|\beta|^2 + |\alpha|^4 - 2|\alpha|^2|\beta|^2 + |\beta|^4 \nonumber \\
&= 2|\alpha|^2|\beta|^2 + |\alpha|^4 + |\beta|^4 \nonumber \\
&= (|\alpha|^2+|\beta|^2)^2 = 1.
\end{align}
For mixed states $\rho = \sum_i p_i |\psi_i\rangle\langle\psi_i|$ with $0 < p_i < 1$, linearity of expectation values gives $\|\Phi_{\mathrm{Bloch}}(\rho)\| < 1$. Pure states parametrize $\mathbb{CP}^1 \cong S^2$, establishing the image claim.
\end{proof}

\subsection{Uniqueness and Emergent Metric}

\begin{theorem}[Uniqueness of Bloch Projection]
\label{thm:bloch_unique}
There exists unique (up to $SO(3)$ rotations) geometrically consistent projection $\Phi: \mathbb{C}^2 \to \mathbb{R}^3$. This is $\Phi_{\mathrm{Bloch}}$, inducing adjoint action $\mathrm{Ad}: SU(2)\to SO(3)$.
\end{theorem}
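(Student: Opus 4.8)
The plan is to recast geometric consistency as an intertwining condition and then apply Schur's lemma to the relevant $SU(2)$-representations. Equivariance (Definition~\ref{def:projection}, condition (1)) says that for each $g\in SU(2)$ there is some $R_g\in O(3)$ with $\Phi(\rho_e(g)\psi)=R_g\Phi(\psi)$; the core of the argument is to show this forces the assignment $g\mapsto R_g$ to be the adjoint covering $\mathrm{Ad}:SU(2)\to SO(3)$, and then that, relative to this fixed target action, the only non-degenerate phase-invariant $\Phi$ is a scalar multiple of $\Phi_{\mathrm{Bloch}}$. Throughout I assume $\Phi$ continuous, which I note below is the regularity input that makes the classification rigorous.

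First I would establish that $g\mapsto R_g$ is a well-defined continuous homomorphism. Non-degeneracy (condition (2)) makes $\Phi(\mathbb{P}\mathcal{H}_e)$ span $\mathbb{R}^3$, so $R_g\Phi(\psi)=\Phi(\rho_e(g)\psi)$ determines $R_g$ uniquely on a spanning set; this uniqueness yields $R_{gh}=R_gR_h$, a genuine homomorphism, continuous because $\Phi$ is. Since $SU(2)$ is connected, the image lies in the identity component $SO(3)$. The continuous homomorphisms $SU(2)\to SO(3)$ are classified up to conjugacy as the trivial map and the adjoint map $\mathrm{Ad}$ with kernel $\{\pm I\}$, the latter unique since $SU(2)$ is the universal double cover of $SO(3)$. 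The trivial case is excluded by non-degeneracy: $SU(2)$ acts transitively on $\mathbb{P}\mathcal{H}_e\cong S^2$ (as noted in the proof of Theorem~\ref{thm:bootstrap}), so $R_g\equiv I$ would make $\Phi$ constant on pure states, contradicting that its image contains $S^2$. Hence $g\mapsto R_g=\mathrm{Ad}(g)$, the vector representation on $\mathbb{R}^3$.

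With the target action fixed as the spin-$1$ representation, $\Phi$ is an $SU(2)$-equivariant vector-valued function on $\mathbb{P}\mathcal{H}_e\cong S^2$. I would decompose the continuous functions on $S^2$ into $SO(3)$-irreducibles, $C(S^2)\cong\bigoplus_{\ell\ge 0}V_\ell$ with $V_\ell$ the spin-$\ell$ (degree-$\ell$ spherical harmonic) representation. Equivariance forces the triple $(\Phi_1,\Phi_2,\Phi_3)$ to span a copy of the vector representation $V_1$, i.e. $\Phi\in\mathrm{Hom}_{SO(3)}(\mathbb{R}^3,C(S^2))$ with image in the $V_1$-isotypic component. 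Since $V_1$ occurs in $C(S^2)$ with multiplicity one, Schur's lemma gives $\dim_{\mathbb{R}}\mathrm{Hom}_{SO(3)}(\mathbb{R}^3,V_1)=1$, so $\Phi=c\,\Phi_{\mathrm{Bloch}}$ for a single real constant $c$, since $\Phi_{\mathrm{Bloch}}$ is a nonzero element of exactly this space (its components $\psi\mapsto\langle\psi|\sigma_i|\psi\rangle$ are precisely the degree-one harmonics). Finally, Lemma~\ref{lem:bloch_sphere} gives $\|\Phi_{\mathrm{Bloch}}\|=1$ on pure states, so non-degeneracy forces $|c|=1$; fixing the orientation compatible with $\mathrm{Ad}$ sets $c=1$, and the remaining freedom is the choice of adapted orthonormal basis of $\mathfrak{su}(2)$, an $SO(3)$-torsor since $\mathrm{Aut}(\mathfrak{su}(2))\cong SO(3)$, which gives precisely the stated uniqueness up to $SO(3)$.

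The hard part will be the isotypic step in the third paragraph: the Schur argument requires treating the components of $\Phi$ as honest elements of a representation space admitting the $\bigoplus_\ell V_\ell$ decomposition, so I must either assume $\Phi$ continuous (or smooth) or extract enough regularity from equivariance to justify the spherical-harmonic expansion and the multiplicity-one claim. Without such an assumption, pathological equivariant maps carrying hidden higher-$\ell$ content could in principle slip through, and ruling these out---while being explicit that locality (condition (3)) is what lets us regard $\Phi$ as a genuine function of a single edge state---is the delicate point. By contrast the homomorphism classification and the norm computation are routine once transitivity of the $SU(2)$-action on $S^2$ is in hand.
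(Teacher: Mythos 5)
Your proposal is correct, and it takes a genuinely different---and substantially more complete---route than the paper's own proof. The paper's proof does only two things: it verifies equivariance of $\Phi_{\mathrm{Bloch}}$ from the conjugation law $U\sigma_i U^\dagger=\sum_j R_{ij}(U)\sigma_j$, and it then \emph{asserts} that ``any equivariant map factors through Lie algebra expectation values,'' closing with a loose dimension count ($d=1$, $d\geq 5$) that never explains why no other equivariant map into $\mathbb{R}^3$ could exist. That assertion is precisely the uniqueness content of the theorem, and the paper leaves it unproved. Your two-step argument supplies it: (i) non-degeneracy makes $g\mapsto R_g$ a well-defined homomorphism, which the double-cover classification pins to $\mathrm{Ad}$ (the trivial homomorphism being excluded by transitivity on $\mathbb{CP}^1$); (ii) viewing $\Phi$ as an intertwiner $\mathbb{R}^3\to C(S^2)$ and invoking multiplicity one of the vector representation $V_1$ in the spherical-harmonic decomposition, Schur's lemma gives $\Phi=c\,\Phi_{\mathrm{Bloch}}$, with $|c|=1$ forced by Lemma~\ref{lem:bloch_sphere} and non-degeneracy. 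What the paper's sketch buys is brevity; what yours buys is an actual proof, together with the correct observation that Definition~\ref{def:projection} contains no regularity hypothesis, and that your reading of condition (2) as granting phase invariance is needed for $\Phi(\mathbb{P}\mathcal{H}_e)$ to even be well defined.

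Two refinements. First, the ``hard part'' you flag can be discharged entirely. By van der Waerden's automatic-continuity theorem, every abstract homomorphism from the compact semisimple group $SU(2)$ into a compact Lie group is continuous, so step (i) needs no regularity on $\Phi$. And once $R_g=\mathrm{Ad}(g)$ is fixed, you do not need spherical harmonics at all: transitivity of $SU(2)$ on $\mathbb{CP}^1$ means $\Phi$ is determined by its value at a single point $[\psi_0]$, whose stabilizer $U(1)$ maps under $\mathrm{Ad}$ to the rotations about the axis $\Phi_{\mathrm{Bloch}}([\psi_0])$; invariance under these rotations forces $\Phi([\psi_0])=c\,\Phi_{\mathrm{Bloch}}([\psi_0])$, and equivariance then gives $\Phi=c\,\Phi_{\mathrm{Bloch}}$ pointwise with no continuity assumption whatsoever. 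Second, the sign is a genuine loose end you share with the paper: $-\Phi_{\mathrm{Bloch}}$ satisfies all three conditions of Definition~\ref{def:projection}, yet it is not $R\circ\Phi_{\mathrm{Bloch}}$ for any $R\in SO(3)$, so uniqueness actually holds up to $O(3)$ rather than $SO(3)$; your ``fixing the orientation'' step imposes a convention that neither the definition nor the theorem statement supplies. That is a defect of the statement itself, not of your argument.
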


\begin{proof}
For $U\in SU(2)$: $U \sigma_i U^\dagger = \sum_j R_{ij}(U)\sigma_j$ with $R(U)\in SO(3)$ (the $2:1$ covering \cite{Hall2015}), yielding equivariance. Pauli matrices span traceless Hermitians (real dimension 3). Any equivariant map factors through Lie algebra expectation values (3 generators $\Rightarrow$ 3 parameters). By irreducible representation classification \cite{FultonHarris1991}: $d=1$ violates non-degeneracy; $d\geq 5$ impossible via $2\times 2$ matrices. Hence $d=3$.
\end{proof}

\begin{remark}[Emergent Metric from Killing Form]
The Euclidean metric on $\mathbb{R}^3$ is canonically induced by the Killing form on $\mathfrak{su}(2)$:
\begin{equation}
\kappa(\sigma_i, \sigma_j) = -8\delta_{ij} \quad \Rightarrow \quad g_{ij} = -\tfrac{1}{8}\kappa(\sigma_i, \sigma_j) = \delta_{ij}.
\end{equation}
By Cartan's theorem \cite{Cartan1894,Knapp2002}, this is the unique (up to scale) Ad-invariant bilinear form on $\mathfrak{su}(2)$. The emergent geometry is necessarily Euclidean, determined by the algebraic structure of the symmetry group. This connection between Lie algebra structure and geometry is well-established in differential geometry \cite{Helgason1978,Hall2015}.
\end{remark}

\section{Dimensional Saturation}
\label{sec:saturation}

For vertex $v$ with valence $k$: $\mathcal{H}_v = \bigotimes_{i=1}^k \mathbb{C}^2$. Applying $\Phi_{\mathrm{Bloch}}$ to each factor yields $\vec{n}_i \in S^2$.

\begin{proposition}
\label{prop:vertex_saturation}
Configurations at $k$-valent vertex lie in $(S^2)^k \subset (\mathbb{R}^3)^k$, independent of $k$.
\end{proposition}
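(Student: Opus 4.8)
The plan is to reduce the claim to the single-edge Bloch analysis already established, exploiting the tensor-product structure of the vertex Hilbert space together with the locality condition of Definition~\ref{def:projection}. First I would fix the factorization $\mathcal{H}_v = \bigotimes_{i=1}^k \mathbb{C}^2$, where each tensor factor is the two-dimensional state space forced on the $i$-th incident edge by Lemma~\ref{lem:qubit_minimal}. A configuration at $v$ is specified by assigning to each incident edge a pure state $|\psi_i\rangle \in \mathbb{C}^2$; because the geometrically consistent projection is required to be \emph{local} (condition (3) of Definition~\ref{def:projection}), $\Phi_{\mathrm{Bloch}}$ acts on each factor separately rather than on any collective or entangled structure. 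This locality is what prevents the joint $k$-qubit state from being fed into a single higher-dimensional projection.

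The core step invokes Lemma~\ref{lem:bloch_sphere} factor by factor: for each pure $|\psi_i\rangle$ we have $\|\Phi_{\mathrm{Bloch}}(\psi_i)\| = 1$, so the image point $\vec{n}_i = \Phi_{\mathrm{Bloch}}(\psi_i)$ lies on $S^2 \subset \mathbb{R}^3$. Assembling the $k$ edges, the induced configuration map
\begin{equation}
\bigotimes_{i=1}^k |\psi_i\rangle \longmapsto (\vec{n}_1, \ldots, \vec{n}_k)
\end{equation}
therefore lands in the product $(S^2)^k$, which embeds coordinate-wise in $(\mathbb{R}^3)^k$. The content of the phrase ``independent of $k$'' is then that each edge contributes its \emph{own} copy of the same three-dimensional target: increasing the valence $k$ multiplies the number of copies but never alters the geometric dimension $d=3$ per edge, which is fixed once and for all by Theorem~\ref{thm:bloch_unique}. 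This is the precise sense in which the dimension saturates under growth of valence.

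The main obstacle is not computational but definitional: it is the question of what counts as a ``configuration.'' If one allowed arbitrary entangled states $|\Psi\rangle \in \mathcal{H}_v$, the reduced edge density matrices $\rho_i = \mathrm{Tr}_{j\neq i}|\Psi\rangle\langle\Psi|$ could be mixed, and Lemma~\ref{lem:bloch_sphere} would then place $\vec{n}_i$ strictly inside the Bloch ball rather than on $S^2$. I would therefore make explicit that the statement concerns pure product configurations---equivalently, the image of the per-edge Bloch map restricted to pure states---where each factor saturates the sphere. Crucially, even in the entangled case the ambient target per edge remains $\mathbb{R}^3$; entanglement only shrinks the radius and cannot manufacture a new geometric dimension, so the dimensional-saturation conclusion is unaffected by this caveat. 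Once this interpretive point is settled, the proposition reduces to the factor-wise application of the single-edge lemma.
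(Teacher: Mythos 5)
Your proposal is correct and follows the same basic route as the paper's own proof: apply the Bloch map edge by edge and assemble the images into $(S^2)^k \subset (\mathbb{R}^3)^k$, with the ambient dimension per edge fixed at $3$. The differences are refinements, and they are worth noting. Where the paper cites Theorem~\ref{thm:bloch_unique} to place each $\vec{n}_i$ on $S^2$, you cite Lemma~\ref{lem:bloch_sphere}, which is the more apt reference---the uniqueness theorem concerns which equivariant map one must use, while the lemma is what actually establishes that normalized pure states land on the unit sphere. More substantively, you make explicit an assumption the paper's proof leaves silent: that a ``configuration'' means a pure product state $\bigotimes_i |\psi_i\rangle$. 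Your observation that an entangled $|\Psi\rangle \in \mathcal{H}_v$ yields mixed reduced density matrices $\rho_i$, whose Bloch vectors lie strictly inside the ball by the mixed-state part of Lemma~\ref{lem:bloch_sphere}, is exactly right; as stated, the proposition's claim of landing \emph{on} $(S^2)^k$ holds only under the product/purity restriction you impose. Your closing point---that entanglement shrinks radii but cannot manufacture new dimensions, so the saturation conclusion feeding Theorem~\ref{thm:saturation} is unaffected---is also correct and makes your version strictly more careful than the paper's, which does not address entangled vertex states at all.
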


\begin{proof}
Each edge incident to vertex $v$ carries state $|\psi_i\rangle \in \mathbb{C}^2$. By Theorem~\ref{thm:bloch_unique}, $\Phi_{\mathrm{Bloch}}(|\psi_i\rangle) = \vec{n}_i \in S^2 \subset \mathbb{R}^3$ for each $i=1,\ldots,k$. The vertex configuration is $(\vec{n}_1, \ldots, \vec{n}_k) \in (S^2)^k \subset (\mathbb{R}^3)^k$. Increasing $k$ adds vectors but does not increase the ambient dimension, which remains $\mathbb{R}^3$.
\end{proof}

\textbf{Gauge-invariant sector:} $\mathcal{I}_v = \mathrm{Inv}_{SU(2)}(\bigotimes_i \mathbb{C}^2)$ has dimension $C_{k/2}$ (Catalan numbers) for even $k$ \cite{Baez1994}. This counts distinct configurations in $\mathbb{R}^3$, not ambient dimension.

\begin{theorem}[Dimensional Saturation]
\label{thm:saturation}
Regardless of vertex valences, all emergent configurations lie in a common $\mathbb{R}^3$.
\end{theorem}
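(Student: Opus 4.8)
The plan is to leverage the already-established results about individual edges and assemble them across the whole graph. First I would observe that Theorem~\ref{thm:saturation} is essentially a global restatement of Proposition~\ref{prop:vertex_saturation}: the latter fixes the ambient dimension at a single vertex to be $\mathbb{R}^3$ independent of the valence $k$, and the remaining task is to verify that these per-vertex spaces are not merely isomorphic to $\mathbb{R}^3$ but are literally one common copy shared across all vertices and edges. The key observation is that Axiom~\ref{ax:global} was introduced precisely for this purpose: it posits a single globally defined orthonormal basis $\{\sigma_x,\sigma_y,\sigma_z\}$ for $\mathfrak{su}(2)$ used by \emph{all} Bloch projections, so that the target $\mathbb{R}^3$ is the same vector space regardless of which edge or vertex we are examining.

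The steps I would carry out are as follows. First, I would fix the global trivialization guaranteed by Axiom~\ref{ax:global}, which is justified via principal-bundle trivialization for finite graphs (every finite graph being contractible, so $P\cong G\times SU(2)$ admits a global section). This identifies the fiberwise copies of $\mathfrak{su}(2)$ with one common three-dimensional space $W\cong\mathbb{R}^3$ equipped with the Killing-form metric from the Remark in Section~\ref{sec:su2}. Second, I would note that by Theorem~\ref{thm:bloch_unique} the projection $\Phi_{\mathrm{Bloch}}$ for each edge maps into this same $W$, since the uniqueness result forces every geometrically consistent projection to be the Bloch map up to an $SO(3)$ rotation, and the global basis fixes that rotational ambiguity once and for all. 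Third, I would apply Proposition~\ref{prop:vertex_saturation} at each vertex to conclude that the configuration at any $k$-valent vertex lands in $(S^2)^k\subset(\mathbb{R}^3)^k=(W)^k$, where the ambient dimension per edge is $3$ and does not grow with $k$. Finally, I would take the union over all vertices and edges to conclude that every emergent directional vector $\vec{n}_e$ lives in the single common space $W\cong\mathbb{R}^3$.

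The main obstacle I anticipate is not a computational one but a conceptual subtlety about what ``common $\mathbb{R}^3$'' means and whether Axiom~\ref{ax:global} is doing genuine work or smuggling in the conclusion. Without the global gauge axiom, each edge's Bloch image would live in its own $\mathbb{R}^3$ related to its neighbors only up to an arbitrary $SO(3)$ rotation, and there would be no canonical way to glue these into a single space---indeed, nontrivial holonomy around cycles in $G$ could obstruct a global identification. The real content of the theorem is therefore the claim that the bundle trivialization removes this obstruction, which hinges on the finiteness (hence contractibility in the relevant sense) of the graph. I would make explicit that the triviality of the principal $SU(2)$-bundle is what guarantees a consistent global frame, and hence that the per-vertex $\mathbb{R}^3$'s of Proposition~\ref{prop:vertex_saturation} are genuinely identified rather than merely abstractly isomorphic; this is the step requiring the most care to avoid the appearance of circularity.
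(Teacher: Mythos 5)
Your proposal is correct and follows essentially the same route as the paper's own proof: both reduce the theorem to Proposition~\ref{prop:vertex_saturation} applied at each vertex, then invoke Axiom~\ref{ax:global} to identify all per-edge target spaces as one common $\mathbb{R}^3$, so that increasing valence (or adding vertices) adds vectors without adding dimensions. Your extra discussion of the principal-bundle trivialization and holonomy simply makes explicit the justification the paper already attaches to Axiom~\ref{ax:global} in Section~\ref{sec:conceptual}, rather than constituting a different argument.
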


\begin{proof}
Consider vertex $v$ with valence $k$. By Proposition~\ref{prop:vertex_saturation}, the configuration is $(\vec{n}_1, \ldots, \vec{n}_k)$ with each $\vec{n}_i \in S^2 \subset \mathbb{R}^3$.

\textbf{Key point:} Each $\vec{n}_i$ inhabits the \emph{same} $\mathbb{R}^3$, not distinct spaces $\mathbb{R}^3_i$. This follows from Axiom~\ref{ax:global}: all Bloch projections use the same basis $\{\sigma_x, \sigma_y, \sigma_z\}$ for $\mathfrak{su}(2)$. Therefore, all vectors project to a common geometric space.

Without global consistency, each edge could project to a different $\mathbb{R}^3_i$, yielding total dimension $3k$ (growing with vertex valence). Axiom~\ref{ax:global} prevents this: regardless of $k$, all vectors coexist in the same $\mathbb{R}^3$.

For a graph with multiple vertices of varying valences $k_1, k_2, \ldots$, the same argument applies at each vertex. Since $\Phi_{\mathrm{Bloch}}$ is uniform across the entire graph, all configurations—regardless of local complexity—inhabit the same ambient $\mathbb{R}^3$. The dimension saturates at $d=3$.
\end{proof}

\section{Uniqueness of SU(2)}
\label{sec:sun}

\subsection{Higher Gauge Groups}

\begin{proposition}
For SU(3) on $\mathbb{C}^3$: geometrically consistent projection requires $d\geq 8$.
\end{proposition}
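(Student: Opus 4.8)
The plan is to run the argument of Theorem~\ref{thm:bloch_unique} with $SU(3)$ in place of $SU(2)$, tracking the dimension of the relevant adjoint representation. First I would record the basic representation theory: the traceless Hermitian operators on $\mathbb{C}^3$ form an $8$-dimensional real vector space $\cong\mathbb{R}^{8}$, spanned by the Gell-Mann matrices $\{\lambda_a\}_{a=1}^{8}$, and the natural candidate is the generalized Bloch (coherence) vector $\Phi(\psi)=\bigl(\langle\psi|\lambda_1|\psi\rangle,\ldots,\langle\psi|\lambda_8|\psi\rangle\bigr)$. For $g\in SU(3)$ one has $g^{\dagger}\lambda_a g=\sum_b \mathrm{Ad}(g)_{ab}\,\lambda_b$, so $\Phi(g\psi)=\mathrm{Ad}(g)\,\Phi(\psi)$ with $\mathrm{Ad}(g)\in SO(8)$; this realizes a geometrically consistent projection into $\mathbb{R}^{8}$ and shows the bound is attained at the level of equivariance.

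The heart of the argument is the lower bound $d\geq 8$. By condition~(3) of Definition~\ref{def:projection} the map $\Phi$ depends only on the edge state, and condition~(2) presents it as a function on the projective space $\mathbb{P}\mathcal{H}_e=\mathbb{CP}^2$; hence $\Phi$ is phase-invariant and factors through the pure-state density matrix $\rho=|\psi\rangle\langle\psi|$. The real space of Hermitian observables on $\mathbb{C}^3$ carries the $SU(3)$-representation $\mathbf{1}\oplus\mathbf{8}$---the singlet being the (constant) trace and the octet $\mathbf{8}$ being the adjoint $\mathfrak{su}(3)$---mirroring the decomposition $\mathbf{3}\otimes\overline{\mathbf{3}}=\mathbf{1}\oplus\mathbf{8}$. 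Thus each equivariant component of $\Phi$ is, up to an additive constant, a linear functional of the traceless part of $\rho$, so $\Phi$ factors through the adjoint representation.

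I would then invoke irreducibility: $SU(3)$ is simple, so its adjoint representation $\mathbf{8}$ is irreducible, and of real type, hence irreducible already over $\mathbb{R}$. By Schur's lemma any nonzero $SU(3)$-equivariant linear map out of $\mathbf{8}$ is injective, so the image of $\Phi$ spans an $8$-dimensional subspace of $\mathbb{R}^{d}$. Were $\Phi$ to land in fewer than $8$ dimensions, the equivariant action would acquire a nontrivial kernel in $\mathbf{8}$, contradicting irreducibility; and a trivial action would make $\Phi$ constant on the transitive $SU(3)$-orbit $\mathbb{CP}^2$, violating non-degeneracy. Hence $d\geq 8$.

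The step I expect to be the main obstacle is excluding a \emph{smaller} orthogonal representation of $SU(3)$ from hosting the projection---most naturally the $6$-dimensional realification of the fundamental $\mathbf{3}$. The resolution is precisely the phase-invariance forced by Definition~\ref{def:projection}: a linear map into the realified fundamental is not well defined on rays, so it cannot descend to $\mathbb{CP}^2$, and only the quadratic density-matrix content survives, which by $\mathbf{3}\otimes\overline{\mathbf{3}}=\mathbf{1}\oplus\mathbf{8}$ lives in the adjoint. I would note, foreshadowing Theorem~\ref{thm:sun_general}, that the $4$-real-dimensional image of $\mathbb{CP}^2$ cannot fill $S^{7}\subset\mathbb{R}^{8}$, so strict non-degeneracy in fact fails for $SU(3)$---exactly the geometric inconsistency that singles out $SU(2)$.
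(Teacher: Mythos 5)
Your strategy is the paper's own strategy---the paper's proof is a two-line assertion that $\dim(\mathfrak{su}(3))=8$ and that equivariance plus non-degeneracy force all generators to be represented in the image space---and most of your elaboration is sound and goes well beyond it: the Gell-Mann construction showing the bound is attained at the level of equivariance, the explicit worry about the $6$-dimensional realification of $\mathbf{3}$, and the closing observation that non-degeneracy in fact fails outright for $SU(3)$ because the image of the $4$-dimensional $\mathbb{CP}^2$ cannot contain $S^{d-1}$. That last point is made by the paper only informally, in the sentence following the proposition.

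There is, however, one genuine gap: the inference ``$\Phi$ factors through $\rho$, and Hermitian matrices decompose as $\mathbf{1}\oplus\mathbf{8}$, \emph{thus} each equivariant component of $\Phi$ is an affine-linear functional of the traceless part of $\rho$.'' Being a function of $\rho$ does not make $\Phi$ a \emph{linear} function of $\rho$, and equivariant maps on pure states can be genuinely nonlinear: for instance, the component of $\rho\otimes\rho$ in the $\mathbf{27}\subset(\mathbf{1}\oplus\mathbf{8})^{\otimes 2}$ is a nonzero, phase-invariant, $SU(3)$-equivariant map out of $\mathbb{CP}^2$ that does not factor through the adjoint. So your Schur argument, which presupposes the target is $\mathbf{8}$, does not cover all admissible $\Phi$. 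The bound survives under a small repair: decompose $\mathbb{R}^d$ into real irreducibles of $SU(3)$. Phase-invariance means the center $\mathbb{Z}_3$ acts trivially on the domain $\mathbb{CP}^2$; by equivariance it then fixes the image pointwise, and non-degeneracy makes the image span $\mathbb{R}^d$, so the orthogonal representation factors through $PSU(3)=SU(3)/\mathbb{Z}_3$. It cannot be trivial, since then $\Phi$ would be constant on the transitive orbit $\mathbb{CP}^2$ (as you correctly note). Every nontrivial real irreducible representation of $PSU(3)$---equivalently every nontrivial triality-zero representation of $SU(3)$: $\mathbf{8}$, the realified $\mathbf{10}$, $\mathbf{27}$, \dots---has real dimension at least $8$, giving $d\geq 8$ with no linearity assumption. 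This repair also subsumes your separate discussion of the realified fundamental: it is excluded automatically because the center acts nontrivially on it.
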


\begin{proof}
$\dim(\mathfrak{su}(3))=8$. Equivariance and non-degeneracy require all generators represented in image space.
\end{proof}

However, $\mathbb{C}^3$ state space is $\mathbb{CP}^2$ (4 real parameters after normalization) $\not\cong S^2$, violating the directional-only constraint.

\begin{theorem}
\label{thm:sun_general}
For SU($N$), $N\geq 3$, on $\mathbb{C}^N$: (1) any geometrically consistent projection satisfies $d\geq N^2-1 > 3$; (2) $\mathbb{C}^N$ encodes non-directional information.
\end{theorem}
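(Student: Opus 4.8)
The plan is to prove the two assertions separately, reusing the representation-theoretic machinery behind Theorem~\ref{thm:bloch_unique}. For part (1), I would first record the standard fact that $\dim(\mathfrak{su}(N)) = N^2-1$: the generalized Gell-Mann matrices $\{T_a\}_{a=1}^{N^2-1}$ form a real basis of the traceless Hermitian $N\times N$ matrices. The natural geometric realization of the fundamental representation is the generalized Bloch map $\psi \mapsto (\langle\psi|T_a|\psi\rangle)_{a=1}^{N^2-1}$, and the intertwining relation $U T_a U^\dagger = \sum_b \mathrm{Ad}(U)_{ab}\, T_b$ shows it is equivariant with respect to the adjoint representation $\mathrm{Ad}: SU(N) \to SO(N^2-1)$. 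The goal is to show that no equivariant, non-degenerate projection can live in fewer than $N^2-1$ real dimensions.

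The key structural step is to decompose all admissible projections representation-theoretically. Because the non-degeneracy condition (2) of Definition~\ref{def:projection} is phrased on the projective space $\mathbb{P}\mathcal{H}_e$, any consistent $\Phi$ is phase-invariant and therefore factors through the rank-one density matrix $|\psi\rangle\langle\psi| \in \mathrm{End}(V)$, with $V = \mathbb{C}^N$ the fundamental. As a representation, $\mathrm{End}(V) \cong \bar V \otimes V \cong \mathbf{1} \oplus \mathrm{ad}$, where $\mathbf{1}$ carries the normalization $\langle\psi|\psi\rangle$ and $\mathrm{ad}$ is the adjoint representation of dimension $N^2-1$. Since the adjoint representation of $SU(N)$ is irreducible for all $N \geq 2$, Schur's lemma forbids equivariantly projecting $\mathrm{ad}$ onto any proper nontrivial subspace: an equivariant $\Phi$ either discards the adjoint entirely (leaving a constant map, which violates non-degeneracy) or embeds all of $\mathrm{ad}$ into $\mathbb{R}^d$. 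Hence $d \geq N^2-1$, and for $N \geq 3$ this gives $d \geq 8 > 3$, proving part (1).

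For part (2), I would compare state spaces directly. By Proposition~\ref{prop:directional} and Lemma~\ref{lem:qubit_minimal}, pure directional information is exactly a point of $S^2$. The pure-state space of $\mathbb{C}^N$ is $\mathbb{CP}^{N-1}$, of real dimension $2(N-1)$. For $N=2$ this is $S^2$, but for $N\geq 3$ one has $\dim_{\mathbb{R}}\mathbb{CP}^{N-1} = 2(N-1) \geq 4 > 2$, and $\mathbb{CP}^{N-1}$ is not even homeomorphic to a sphere (its higher cohomology is nontrivial). Thus the internal degrees of freedom parametrize strictly more than a direction---non-directional information---establishing (2). As a by-product, the generalized Bloch image is the coadjoint orbit $\mathbb{CP}^{N-1}$, a $2(N-1)$-dimensional submanifold of $S^{N^2-2}$; since $2(N-1) < N^2-2$ for $N\geq 3$, strict non-degeneracy in fact fails altogether, reinforcing the incompatibility.

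I expect the main obstacle to be the rigorous justification that equivariance forces the adjoint summand into the target---that $\Phi$ cannot evade the dimension bound by landing in a smaller orthogonal representation. The cleanest resolution is the irreducibility of $\mathrm{ad}$ combined with the phase-invariance argument that pins $\Phi$ to $\bar V\otimes V$. The delicate point is handling maps that are not assumed \emph{a priori} to be quadratic in the state: one must argue that projective invariance together with continuity (or smoothness) still forces factorization through density matrices before the decomposition $\mathbf{1}\oplus\mathrm{ad}$ can be applied and Schur's lemma invoked.
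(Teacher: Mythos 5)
Your overall strategy is the same as the paper's, which for this theorem offers only a sketch (the two-line proof of the SU(3) proposition plus the remark that the state space is $\mathbb{CP}^{N-1}\not\cong S^2$); you flesh it out considerably. Your part (2) is complete and in fact more careful than the paper's parenthetical dimension count: $\dim_{\mathbb{R}}\mathbb{CP}^{N-1}=2(N-1)>2$ together with the cohomological obstruction to $\mathbb{CP}^{N-1}$ being a sphere fully establishes non-directionality. Your closing observation---that the generalized Bloch image is a $2(N-1)$-dimensional orbit inside $S^{N^2-2}$, so strict non-degeneracy fails outright for $N\geq 3$---is also correct, and is a point the paper never makes explicit.

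Part (1), however, has a genuine gap, which you flag yourself but do not close, and the closure you suggest does not work. Schur's lemma applies to \emph{linear} equivariant maps, but a geometrically consistent $\Phi$ in the sense of Definition~\ref{def:projection} is only given as a map on the set of rank-one projectors, a nonlinear subvariety of $\mathrm{End}(V)$; phase invariance plus continuity do \emph{not} force it to be the restriction of a linear map on $\mathbf{1}\oplus\mathrm{ad}$. Already for $SU(2)$ the quadrupole map $\vec{n}\mapsto Y_2(\vec{n})$, sending the Bloch vector to the second-spherical-harmonic vector, is continuous and equivariant but lands in the five-dimensional spin-2 representation and is quadratic, not linear, in the density matrix; so your dichotomy ``discard $\mathrm{ad}$ or embed it'' does not exhaust the possibilities. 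The standard repair is the homogeneous-space classification: $SU(N)$ acts transitively on $\mathbb{CP}^{N-1}\cong SU(N)/\mathrm{S}(U(1)\times U(N-1))$, so continuous equivariant maps into $\mathbb{R}^d$ correspond bijectively to stabilizer-fixed vectors in the target representation (Frobenius reciprocity), and the irreducible representations of $SU(N)$ admitting such spherical vectors are exactly those of highest weight $k(\lambda_1+\lambda_{N-1})$, $k\geq 0$: the trivial one and, at the bottom of the nontrivial tower, the adjoint of dimension $N^2-1$. Hence any nonconstant equivariant map requires $d\geq N^2-1$. Alternatively, under the paper's strong non-degeneracy the image must be a single orbit equal to $S^{d-1}$, forcing the quotient $PSU(N)$ to act transitively on a sphere, which the Montgomery--Samelson--Borel classification excludes for $N\geq 3$. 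Either route closes your gap; Schur's lemma alone does not.
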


Other compact Lie groups similarly fail to yield $d=3$. SU(2) is distinguished by the coincidence: $\dim(\mathfrak{su}(2))=3$ AND $\mathbb{CP}^1 \cong S^2$ (pure directional information).

\section{Numerical Verification}
\label{sec:numerics}

We verify theoretical predictions via numerical simulation.

\subsection{Bloch Projection Coverage}

Figure~\ref{fig:bloch} shows 200 random $\mathbb{C}^2$ states projected via $\Phi_{\mathrm{Bloch}}$. All pure states land on $S^2$, with uniform coverage confirming non-degeneracy.

\begin{figure}[t]
\centering
\includegraphics[width=\columnwidth]{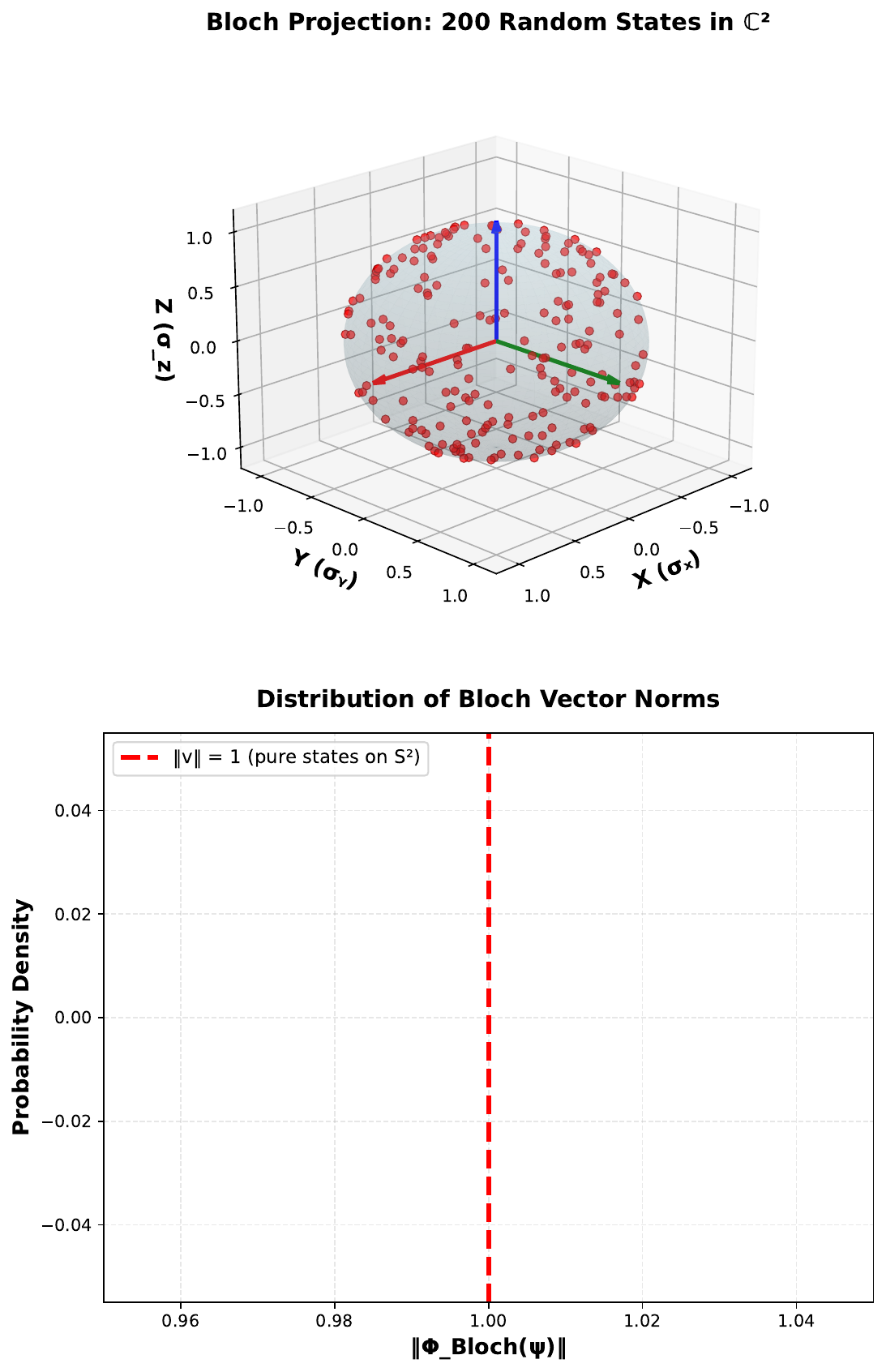}
\caption{Bloch projection of 200 random states to $S^2 \subset \mathbb{R}^3$. Top: 3D visualization showing uniform sphere coverage. Bottom: Distribution of vector norms, peaked at unity for pure states.}
\label{fig:bloch}
\end{figure}

\subsection{Dimensional Saturation}

Figure~\ref{fig:saturation} demonstrates Theorem~\ref{thm:saturation}. For valences $k=4,6,8,10$, all configurations lie in the same $\mathbb{R}^3$.

\begin{figure}[t]
\centering
\includegraphics[width=\columnwidth]{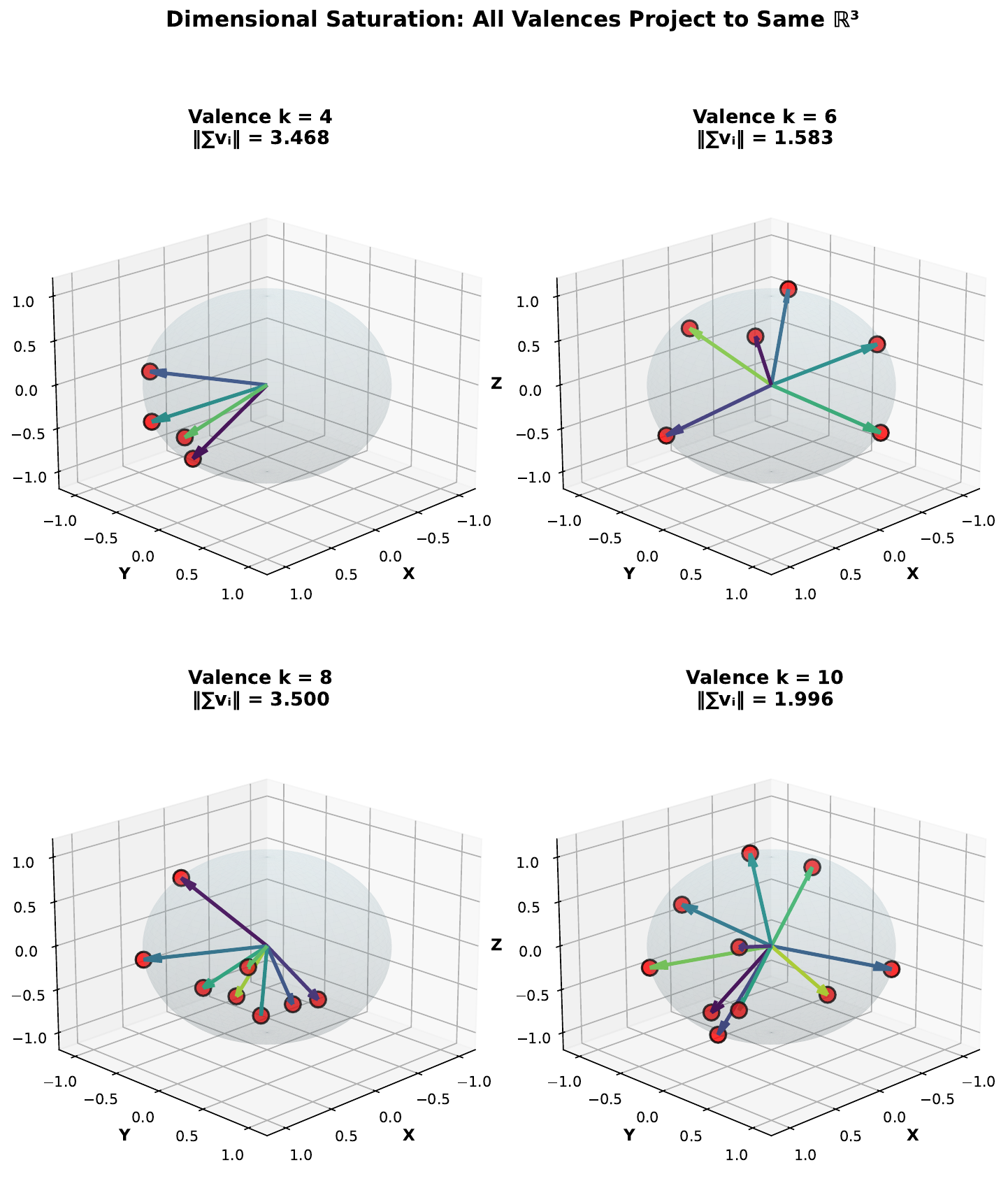}
\caption{Dimensional saturation for vertices of different valences. Each panel shows $k$ Bloch vectors (blue arrows) from random qubit states, all lying in the same ambient $\mathbb{R}^3$ space. Increasing vertex valence adds vectors but does not increase spatial dimension.}
\label{fig:saturation}
\end{figure}

\section{Discussion and Open Questions}
\label{sec:conclusion}

\subsection{Summary of Results}

We established that three-dimensional Euclidean geometry emerges necessarily from SU(2) representation theory in graph-based quantum systems through a self-consistency bootstrap:

\begin{enumerate}
\item \textbf{Information constraint:} Non-embedding forces directional encoding
\item \textbf{Quantum minimality:} Directional information requires $\mathbb{C}^2$
\item \textbf{Representation theory:} $\mathbb{C}^2$ symmetry is SU(2) with $\dim(\mathfrak{su}(2))=3$
\item \textbf{Geometric validation:} Emergent $d=3$ matches Bloch sphere $S^2$
\item \textbf{Metric emergence:} Killing form canonically induces Euclidean structure
\item \textbf{Robustness:} Dimensional saturation under topology changes
\item \textbf{Uniqueness:} Only SU(2) yields $d=3$ consistently
\end{enumerate}

The bootstrap mechanism establishes mutual determination between quantum information structure ($\mathbb{C}^2$ states) and geometric representation ($\mathbb{R}^3$ space), with no circular reasoning.

\subsection{Mathematical Contribution}

This work demonstrates rigorously that algebraic structures can constrain geometric dimensionality. Given representation-theoretic axioms (non-embedding, quantum mechanics, equivariance), we prove $d=3$ follows necessarily. This provides an existence proof: dimensional emergence from algebra is mathematically possible with no additional geometric inputs.

The key technical results are:
\begin{itemize}
\item Classification of minimal quantum representations of directional information
\item Uniqueness of Bloch projection as equivariant map
\item Derivation of Euclidean metric from Killing form
\item Proof of dimensional stability (saturation)
\item Incompatibility of higher groups with three-dimensionality
\end{itemize}

\subsection{Potential Applications}

This framework may inform various research areas:

\textbf{Quantum information theory:} Natural geometric representations of qubit systems; understanding structure of quantum state spaces; visualization of entanglement and coherence.

\textbf{Discrete geometry:} Emergence of continuous structure from combinatorial substrates; relationship between algebraic and geometric properties; dimensional constraints in discrete systems.

\textbf{Quantum foundations:} How information-theoretic principles might constrain physical structure; relationship between quantum mechanics and geometry; role of symmetry in determining spatial properties.

However, specific connections to these areas remain to be established through future work.

\subsection{Open Questions}

Several fundamental questions remain:

\textbf{Physical interpretation:} What physical systems (if any) realize the non-embedding hypothesis? Is there empirical motivation beyond mathematical elegance?

\textbf{Dynamical extension:} Can consistent dynamics be defined on this framework? What Hamiltonians preserve the geometric structure? Does time evolution respect dimensional constraints?

\textbf{Continuum limit:} Does a meaningful continuum limit exist? What mathematical structures emerge? Can differential geometry be recovered?

\textbf{Connection to spacetime:} Is there any relationship between this emergent $\mathbb{R}^3$ and physical space? If so, how does Lorentzian signature arise? How does time emerge?

\textbf{Observable consequences:} Are there systems where dimensional emergence could be tested? What predictions (if any) does this framework make?

These questions define a research program extending beyond the present mathematical result.

\subsection{Comparison with Existing Frameworks}

Various approaches explore dimensional emergence or discrete quantum geometry \cite{Oriti2016,Ambjorn2012,Sorkin2003}. Our contribution is proving a specific representation-theoretic constraint: SU(2) symmetry on $\mathbb{C}^2$ states necessitates three-dimensional geometric representation. Whether and how this connects to other frameworks remains an open question requiring detailed investigation.

\subsection{Methodological Remarks}

We emphasize the distinction between:
\begin{itemize}
\item \textbf{What is proven:} Mathematical theorems following from stated axioms
\item \textbf{What is conjectured:} Potential physical interpretations and applications
\item \textbf{What remains unknown:} Connections to continuum theories and observable physics
\end{itemize}

This work belongs primarily to the first category: a rigorous mathematical result about representation theory and geometric emergence. Physical interpretation, while motivating, remains conjectural.

\subsection{Concluding Remarks}

We have demonstrated that three-dimensional Euclidean geometry emerges necessarily from SU(2) representation theory in graph-based quantum systems. The dimension $d=3$ is not an arbitrary choice but follows from self-consistency between quantum information structure (minimal directional encoding via $\mathbb{C}^2$) and geometric representation (Bloch sphere in $\mathbb{R}^3$).

This establishes a mathematical result: certain algebraic structures necessitate specific geometric consequences. Whether this has implications for physical spacetime, quantum gravity, or other areas remains an open question for future investigation.

The central contribution is transforming a phenomenological observation (space is three-dimensional) into a mathematical theorem (given our axioms, $d=3$ follows necessarily). This may provide insight into how geometric structure could arise from more fundamental principles, though substantial work remains to connect this mathematical framework to physical reality.

\begin{acknowledgments}
The author acknowledges financial support from the Portuguese Foundation for Science and Technology (FCT) under Contract no. UID/00618/2023.
\end{acknowledgments}

\end{document}